\newtheorem{theorem}{Theorem}[section]
\theoremstyle{remark}
\newtheorem{remark}[theorem]{Remark}
\def\R {\mathbb{R}}
\title{Ground state energy of Bogoliubov energy functional in the high density limit}
\author[N. Mokrza\'nski]{Norbert Mokrza\'nski$^*$}
\address{Department of Mathematical Methods in Physics, Faculty of Physics,\\ University of Warsaw, Pasteura 5, 02-093 Warszawa, Poland}
\email{norbert.mokrzanski@fuw.edu.pl}
\thanks{$^\ast$Corresponding author.}
\author[B. Pałuba]{Bartosz Pałuba}
\address{Institute of Geophysics, Faculty of Physics, \\University of Warsaw, Pasteura 5, 02-093 Warszawa, Poland}
\email{bartosz.paluba@fuw.edu.pl}
\date{May 27, 2024}
\begin{document}
\DeclareRobustCommand{\uvec}[1]{{%
    \bm{\mathbf{e_{#1}}}%
}}

\DeclareRobustCommand{\vec}[1]{{%
    \bm{\mathbf{#1}}%
}}

\newcommand{\norm}[1]{\left|#1\right|}
\newcommand{\dd}{\text{d}}
\newcommand{\qRq}{\quad\Rightarrow\quad}
\renewcommand{\labelenumi}{\roman{enumi})}

\maketitle

\begin{abstract}
We consider the Bogoliubov energy functional proposed by Napiórkowski, Reuvers and Solovej and analize it in the high density regime. We derive a two term  asymptotic expansion of the ground state energy.
\end{abstract}


\section{Introduction}
In 1947 Nikolay Bogoliubov published the work "On the theory of superfluidity" \cite{Bogo}, which turned out to be a fundamental result in the theory of interacting Bose gases. The main goal of that  paper was a microscopic derivation of Landau's criterion of superfluidity, which is quantum effect  observed in liquid helium \textsuperscript{4}He. Bogoliubov, using a heuristic argument, proposed that the Hamiltonian of the system could be replaced by an effective, simpler one. There were two main assumptions made by Bogoliubov. The first one was that there is Bose-Einstein Condensation in the system. The second one was that the strength of interaction between two particles outside the condensate is negligible. Hence, to obtain an effective Hamiltonian, terms involving more than two creation or annihilation operators of non-zero modes can be dropped. This led to an effective quadratic Hamiltonian, which could be diagonalized explicitly, leading to an excitation spectrum that allowed to explain the superfluidity phenomenon using the Landau's argument \cite{Landau}.  

Bogoliubov's approach, although it was lacking mathematical rigor, appeared to be very useful for describing bosonic interacting systems. When Bose-Einstein condensates had begun to be obtained in experiments involving cold atoms \cite{Exp1, Exp2}, this topic became particularly interesting in both theoretical and mathematical physics communities. Since then significant progress regarding understanding of this theory has been made (see \cite{Napiorkowski-23,Schlein-22} for recent reviews).

In this paper we want to revisit the approach proposed by Napiórkowski, Reuvers and Solovej in \cite{NapReuSol1, NapReuSol2, NapReuSol3, FouNapReuSol-19} (and before that by Critchley and Solomon in \cite{CritSolo}). The authors analyze the thermodynamic properties of the interacting Bose gas system using variational perspective based on Bogoliubov theory. They introduce the free energy (density) functional $\mathcal{F}$ obtained by evaluating the free energy expectation value on quasi-free states and passing to the macroscopic limit (see next section for more detailed discussion). In the first paper \cite{NapReuSol1} the authors prove the existence of the minimizers under certain assumptions on the interaction potential $V$ (in \cite{Old} it has been shown that some of those assumptions can be weakened). They also prove that this model predicts a phase transition in the system. In the second paper \cite{NapReuSol2} authors consider the dilute limit and obtain results concerning the critical temperature and the expansion of the free energy in terms of temperature and total density $\rho$ of the system. In \cite{FouNapReuSol-19} the authors considered the dilute two-dimensional model and obtained a ground state energy expansion. The main reason why the authors analyzed this variational model is that the many-body problem (i.e. starting from the many-body Schr\"odinger equation) in the thermodynamics limit is very difficult and the results are rather limited (see  \cite{Haberbergeetall-23,Bastietall-22,FouSol-20} for most recent ones).

The goal of this paper is to investigate the opposite, high density limit of the Bogoliubov functional in the case of zero temperature. From the many-body perspective, this problem was for the first time considered by Lieb in \cite{Lieb}. In that paper he argued how the two leading terms of the ground state energy should look like. The lower bound is in fact a straightforward consequence of the Onsager lemma (see \cite[Lemma 7.2]{DerNap-13}). Lieb also claims the upper bound can be obtained using modified Girardeu states (see \cite{Girardeau}), however computational details are missing there. We plan to provide an alternative approach for this problem in a future work, in the spirit of \cite{NamNap}. Later, the high density asymptotics have been derived within an effective model introduced by Lieb and then analyzed by Carlen, Jauslin and Lieb \cite{CJL-19,CHJL-20,Ja-22}. Our analysis is motivated by the question of one of the authors above who asked whether the Bogoliubov functional predicts the high density free energy expansion in agreement with the works above. We show that this is, indeed, the case. Our second result shows that imposing some extra conditions on the regularity of the potential leads to obtaining better, more precise estimations of the error term.

The paper is structured as follows: in the next section we will sketch the derivation of the functional and formulate the main result. The section following it is devoted to the proof of the main theorem. In the last section we obtain additional error estimates.

\section{The functional and the main result}

We will start with a brief sketch of the derivation of the Bogoliubov energy functional. We refer to the appendix of \cite{NapReuSol1} for full details of the derivation, also for the non-zero temperature case concerning the entropy.

The Hamiltonian of the interacting Bose gas in the box of size $L$ (in the momentum representation) is given by:
\begin{equation}
    H = \sum_{p\in \frac{2\pi}{L}\mathbb{Z}^d} \left(p^2-\mu\right)a_p^*a_p + \frac{1}{2L^3}\sum_{p,q,k\in \frac{2\pi}{L}\mathbb{Z}^d}\widehat{V}(k)a_{p+k}^*a_{q-k}^*a_pa_{q},
    \label{Hamiltonian}
\end{equation}
where $p$ denotes momentum, $\mu$ is a chemical potential of the gas, $\widehat{V}(k)$ is a Fourier coefficient of the (periodized) potential function $V$ and $a_p$, $a_p^*$ are respectively annihilation and creation operators of a particle with momentum $p$. Inspired by the fact that the ground states of quadratic Hamiltonians are quasi-free states, we will restrict our attention to such states only. By definition, these are the states that satisfy Wick rule and therefore are fully determined by density function $\gamma$ and function of pairing $\alpha$:
\begin{equation*}
    \gamma(p)=\langle a_p^*a_{p}\rangle, \quad \alpha(p)=\langle a_pa_{-p}\rangle
\end{equation*}
Without loss of generality we can also assume that $\langle a_pa_{-p}\rangle = \langle a_p^*a_{-p}^*\rangle$, so that $\alpha(p)$ is real-valued.

In order to include the possibility of Bose-Einstein condensation in the system, we use Weyl transformation to introduce the density of the condensate $\rho_0$. This operation transforms the annihilation operators as:
$$a_p \to a_p + \delta_{0,p}\sqrt{L^3\rho_0},$$
analogously for the creation operators. 

Having the Weyl transformed Hamiltonian, we consider its expectation value in the quasi-free, translation invariant state. Dividing this expectation value by the volume of the system $L^3$ and formally passing to the macroscopic limit (that is taking $|\Lambda| \to \infty$ and assuming $\frac{1}{|\Lambda|}\sum_p \to \frac1{(2\pi)^3}\int_{\R^3}\dd p$) we conclude that the energy (density) may be described by the functional $\mathcal{F}$ given by the formula:
\begin{equation}
\label{functional}
    \begin{split}
    \mathcal{F}(\gamma,\alpha,\rho_0) &= \left(2\pi\right)^{-3} \int_{\mathbb{R}^3}p^2\gamma(p)\dd p  + \frac12\widehat{V}(0)\rho^2 \\&
    + \rho_0\left(2\pi\right)^{-3}\int_{\mathbb{R}^3}\widehat{V}(p)\left[\gamma(p)+\alpha(p)\right]\dd p
    \\&
    +\frac12 \left(2\pi\right)^{-6}\iint_{\mathbb{R}^3\times \mathbb{R}^3}\widehat{V}(p-q)\left[\alpha(p)\alpha(q)+\gamma(p)\gamma(q)\right]\dd p\dd q.
    \end{split}   
\end{equation}
Here $\rho$ is the total density of the system, defined as:
\begin{equation}
    \label{total_density}
    \rho = \rho_0 + \left(2\pi\right)^{-3}\int_{\mathbb{R}^3}\gamma(p)\dd p =: \rho_0 + \rho_\gamma,
\end{equation}
where $\rho_\gamma$ was introduced to denote the density of the particles outside the condensate. For our purpose, the convention of the Fourier transform and its inverse is:
\begin{equation*}
    \widehat{V}(p) = \int_{\mathbb{R}^3}V(x)e^{-ipx}\dd x, \quad V(x) = \left(2\pi\right)^{-3}\int_{\mathbb{R}^3}\widehat{V}(p)e^{ipx}\dd p.
\end{equation*}
The domain of the functional is as follows:
\begin{equation}
    \label{domain}
    \mathcal{D} = \left\{(\gamma, \alpha, \rho_0) \colon \ \gamma\in L^1\left(\left(1+p^2\right)\dd p\right), \ \gamma(p)\geq 0,\ \alpha(p)^2\leq\gamma(p)^2+\gamma(p),\ \rho_0\geq 0\right\}.
\end{equation}
The inequalities in the above definition correspond to the physical assumptions that the number of particles cannot be negative and the associated generalized two-particle reduced density matrix has to be positive definite.

The goal of this paper is to study asymptotic behavior of the canonical ground state energy, i.e. the quantity:
\begin{equation}
\label{GSE_def}
E(\rho) = \inf_{\mathcal{D}_\rho} \mathcal{F}(\gamma, \alpha, \rho_0), 
\end{equation}
where $\mathcal{D}_\rho$ is the subset of the domain with total density \eqref{total_density} fixed to be equal $\rho$. Note that this is not a standard meaning of a canonical formulation of the problem, as it is only the expectation value of the number of particles that is fixed. We are interested in deriving the leading order terms of the asymptotic expansion of the energy in the high density regime $\rho \to \infty$ and give an estimation of the error. Our result can be considered as the counterpart to the result of \cite{NapReuSol2}, where the authors investigated the behavior of $\mathcal{F}(\gamma,\ \alpha,\ \rho_0)$ in the dilute limit $\rho^{1/3}a \ll 1$, where $a$ is a scattering length of the interaction potential $V$.

The existence of both canonical and grand canonical (i.e. when the total density $\rho$ of the system is not fixed) minimizers for this functional was proven in \cite{NapReuSol1} under certain assumptions on the potential $V$. In \cite{Old} it was shown that those assumptions in the grand canonical version can be reduced to:
\begin{equation}
\label{V_assumption}
V \ge 0, \; V \in L^1(\R^3), \; V \not \equiv 0, \; \widehat V \ge 0, \; \widehat V \in L^1(\R^3).
\end{equation}
In this paper we will work under the same assumptions on the potential $V$. After the proof of the first theorem we will discuss the possibilities of weakening the assumptions even further (cf. Remark \ref{V_not_rep} at the end of next section). We would like to emphasize that our result is completely independent of the existence of the minimizers, yet we impose assumptions \eqref{V_assumption} as they happen to be well suited for the considered problem.

Before stating the main result, to avoid misinterpretation, we will recall the notation used to denote asymptotic errors: we say that some function $h(x)$ is of order $o(f(x))_{x \to \infty}$ if $\lim_{x \to \infty} \frac{f(x)}{h(x)} = 0$. Similarly $h(x)$ is of order $O(f(x))$ if $\frac{f(x)}{h(x)}$ stays bounded for $x \to \infty$.

Now we can state the theorems.

\begin{theorem}
\label{main_theorem}
Assume the potential $V$ satisfies \eqref{V_assumption}. The ground state energy $E(\rho)$ defined as in \eqref{GSE_def} in the limit of high densities $\rho \to \infty$ can be expanded as:
\begin{equation}
    \label{asymptotics}
    E(\rho) = \frac12\widehat{V}(0) \rho^2 - \frac12V(0)\rho + o(\rho)_{\rho \to \infty}.
\end{equation}
\end{theorem}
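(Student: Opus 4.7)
The plan is to establish matching upper and lower bounds of the claimed asymptotic form. The lower bound turns out to be essentially free, while the upper bound is achieved by a carefully calibrated plateau trial state.

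For the lower bound, I would first drop the two quartic terms $\frac{1}{2}(2\pi)^{-6}\iint \widehat V(p-q)\gamma(p)\gamma(q)\,\dd p\,\dd q$ and $\frac{1}{2}(2\pi)^{-6}\iint \widehat V(p-q)\alpha(p)\alpha(q)\,\dd p\,\dd q$, both of which are nonnegative: by Plancherel each equals $\frac{1}{2}\int V(x)|\tilde f(x)|^2\,\dd x$ for a suitable $\tilde f$, and $V\ge 0$. What remains of $\mathcal F$ is $\frac{1}{2}\widehat V(0)\rho^2 + (2\pi)^{-3}\int[A(p)\gamma(p)+B(p)\alpha(p)]\,\dd p$ with $A=p^2+\rho_0\widehat V(p)$ and $B=\rho_0\widehat V(p)$. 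Since both integrand and admissibility constraint $\alpha(p)^2\le\gamma(p)^2+\gamma(p)$ are pointwise in $p$, I would minimize at each momentum: the standard Bogoliubov computation yields the pointwise minimum $\frac{1}{2}(\sqrt{A^2-B^2}-A)$. The elementary inequality $\sqrt{A^2-B^2}-A\ge -B$ (equivalent to $A\ge B\ge 0$, true since $A=p^2+B$) then gives
\[
\mathcal F(\gamma,\alpha,\rho_0)\ \ge\ \tfrac{1}{2}\widehat V(0)\rho^2-\tfrac{1}{2}\rho_0 V(0)\ \ge\ \tfrac{1}{2}\widehat V(0)\rho^2-\tfrac{1}{2}V(0)\rho
\]
for every admissible triple (using $\rho_0\le\rho$), which is in fact sharper than what is required.

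For the upper bound, I would use an explicit plateau ansatz: with parameters $c=c(\rho)>0$ and $\Lambda=\Lambda(\rho)>0$ to be fixed later, set $\gamma(p)=c\,\mathbf 1_{|p|\le\Lambda}(p)$ and $\alpha(p)=-\sqrt{c^2+c}\,\mathbf 1_{|p|\le\Lambda}(p)$ (saturating the constraint), together with $\rho_0=\rho-\rho_\gamma$ where $\rho_\gamma=c\Lambda^3/(6\pi^2)$. Evaluating the functional gives Hartree $H=\frac{1}{2}\widehat V(0)\rho^2$, kinetic $T=O(c\Lambda^5)$, and, via Young's inequality $\iint\widehat V(p-q)f(p)f(q)\,\dd p\,\dd q\le \|\widehat V\|_1\|f\|_2^2$, quartic contributions $Q_\gamma+Q_\alpha=O(c^2\Lambda^3)$. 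For the linear-in-$\rho_0$ term I would use the large-$c$ expansion $c-\sqrt{c^2+c}=-\frac{1}{2}+O(1/c)$:
\[
\rho_0\bigl(c-\sqrt{c^2+c}\bigr)(2\pi)^{-3}\!\!\int_{|p|\le\Lambda}\!\!\widehat V(p)\,\dd p \ =\ -\tfrac{1}{2}\rho V(0)+O(\rho/c)+O(\rho\,\varepsilon(\Lambda))+O(\rho_\gamma),
\]
where $\varepsilon(\Lambda):=(2\pi)^{-3}\!\int_{|p|>\Lambda}\widehat V(p)\,\dd p\to 0$ as $\Lambda\to\infty$ since $\widehat V\in L^1$. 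Choosing for instance $c=\Lambda=\rho^{1/7}$ gives $T=O(\rho^{6/7})$, $Q=O(\rho^{5/7})$, $\rho_\gamma=O(\rho^{4/7})$, $\rho/c=O(\rho^{6/7})$, so all errors are $o(\rho)$ and $\rho_0\ge 0$ holds for $\rho$ large; this yields $\mathcal F\le \frac{1}{2}\widehat V(0)\rho^2-\frac{1}{2}V(0)\rho+o(\rho)$.

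The main obstacle is the simultaneous balance in the upper bound: one needs $c\to\infty$ (without which $c-\sqrt{c^2+c}$ never reaches $-\frac{1}{2}$), and $\Lambda\to\infty$ (so the truncation captures all of $\widehat V$), while keeping $c\Lambda^5$, $c^2\Lambda^3$, and $c\Lambda^3$ all $o(\rho)$ so that none of the kinetic, quartic, or condensate-depletion corrections exceed the leading order. The choice $\rho^{1/7}$ is chosen with a comfortable margin; any polynomial growth satisfying those inequalities would work equally well.
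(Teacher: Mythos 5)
Your proposal is correct and takes essentially the same route as the paper: a lower bound obtained by discarding the nonnegative interaction terms and exploiting the pointwise constraint $\alpha^2\le\gamma^2+\gamma$ (your pointwise Bogoliubov minimization gives exactly the paper's inequality $\gamma+\alpha\ge-\tfrac12$ integrated against $\rho_0\widehat V$), and an upper bound from a plateau trial state $\gamma=c\,\chi$, $\alpha=-\sqrt{c^2+c}\,\chi$ on a large region with $c,\Lambda\to\infty$ calibrated so the kinetic, quartic and depletion errors are $o(\rho)$ (ball versus cube, and the particular exponent $\rho^{1/7}$ versus the paper's general $\lambda=\rho^r$, $L=\rho^s$ with $2r+6s<1$, are immaterial). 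The only step to tighten is the appeal to Plancherel for the $\alpha$-quartic term: the constraint only guarantees $\alpha\in L^1+L^2$, so, as the paper does, one should split $\alpha=\alpha\chi_{\gamma>1}+\alpha\chi_{\gamma\le1}$ and justify $(2\pi)^{-6}\iint\widehat V(p-q)\alpha(p)\alpha(q)\,\dd p\,\dd q=\int_{\R^3} V(x)|\check\alpha(x)|^2\,\dd x\ge 0$ term by term (for the $\gamma$-quartic term no transform is needed, since $\widehat V\ge0$ and $\gamma\ge0$ already make the integrand nonnegative).
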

The leading terms in the above expansion correspond to the ones proposed in \cite{Lieb} (where the claim was phrased in terms of energy per particle). To prove this theorem we will establish suitable lower and upper bound of the ground state energy $E(\rho)$ -- this will be done in the next section. Before that, we will formulate the next result concerning the error order estimation.

\begin{theorem}
\label{extra_regularity}
Let $V$ be a function satisfying assumptions \eqref{V_assumption}.
\begin{enumerate}
\item Assume furthermore that $\widehat V$ decays polynomially at infinity, that is, for sufficiently large $|p|$, as:
\begin{equation}
\label{rate_of_conv}
\widehat{V}(p) \leq \frac{C}{|p|^{k}},
\end{equation}
for some constants $C > 0$, $k > 3$. Then there exist a $\theta \in \left(\frac23, 1\right)$ (dependent on $k$)\\*such that:
\begin{equation}
\label{asymptotics_regular}
E(\rho) = \frac12\widehat{V}(0) \rho^2 - \frac12V(0)\rho + O(\rho^\theta)_{\rho \to \infty} .
\end{equation}
The exact value of $\theta$ can be chosen as:
$$\theta = \frac{2k}{3k - 3}.$$

\item Assume that instead of \eqref{rate_of_conv} function $\widehat V$ decays exponentially, i.e. satisfies inequality
$$\widehat{V}(p) \leq Ce^{-c|p|}$$
for some constants $C>0$ and $c>0$. Then
\begin{equation}
\label{asymptotics_regular_exp}
E(\rho) = \frac12\widehat{V}(0) \rho^2 - \frac12V(0)\rho + O\left(\rho^{2/3} (\ln \rho)^6\right)_{\rho \to \infty} .
\end{equation}
\end{enumerate}

\end{theorem}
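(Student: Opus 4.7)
The plan is to re-examine the proof of Theorem \ref{main_theorem} and, rather than merely verifying that the errors are $o(\rho)$, track their explicit dependence on the momentum cutoff $K$ that enters the argument. In that proof, $K$ is chosen so that two competing contributions are both $o(\rho)$: one that grows as a polynomial in $K$ (coming from the energetic cost of exciting modes in the region $|p|<K$ of the trial state, and from the analogous control in the lower bound) and one that decays as a function of $K$ (capturing the tail of $\widehat V$ on $|p|>K$). Under the stronger decay hypotheses on $\widehat V$, the tail can be made much smaller, and optimizing the two contributions in $K$ yields a genuine polynomial rate.

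For part (i), I would start by isolating from the proof of Theorem \ref{main_theorem} schematic upper and lower bounds of the form $E(\rho)=\tfrac12\widehat V(0)\rho^2-\tfrac12 V(0)\rho+R_\pm(K,\rho)$, where the remainder splits as $R_\pm(K,\rho)=A_\pm(K)+B_\pm(K,\rho)$ with $A_\pm$ a fixed polynomial in $K$ and $B_\pm$ a tail term controlled by $\rho\int_{|p|>K}\widehat V(p)\,\dd p$ together with a few analogous tail integrals. Assumption \eqref{rate_of_conv} gives $\int_{|p|>K}\widehat V(p)\,\dd p\leq C'K^{3-k}$, hence $B_\pm=O(\rho K^{3-k})$. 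Balancing the dominant $K^a$ term of $A_\pm$ against this tail then produces the value $\theta=2k/(3k-3)$ and an optimal cutoff of the form $K=\rho^{1/(3k-3)}$; that $\theta\in(2/3,1)$ for $k>3$ is an elementary algebraic check.

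For part (ii), I would rerun the same optimization with the exponential tail bound $\int_{|p|>K}\widehat V(p)\,\dd p\leq CK^2e^{-cK}$, which decreases in $K$ faster than any inverse polynomial. Accordingly, the tail contribution $B_\pm$ becomes negligible as soon as $K$ grows with $\rho$, and the rate is dictated entirely by $A_\pm(K)$. Selecting $K\sim c^{-1}\log\rho$ makes $B_\pm$ decay faster than any inverse power of $\rho$, while substituting into $A_\pm(K)$ produces the announced $\rho^{2/3}(\log\rho)^6$: the prefactor $\rho^{2/3}$ matches the $k\to\infty$ limit of $\rho^{2k/(3k-3)}$ from part (i), and the factor $(\log\rho)^6$ reflects the exact polynomial degree of $A_\pm$.

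The main difficulty I expect is bookkeeping: each estimate in the proof of Theorem \ref{main_theorem} must be revisited with its $K$-dependence made explicit, and one must verify that no step forces an independent constraint on $K$ that would rule out the optimal choice. The lower bound is the delicate place, because it typically proceeds through an Onsager-type positivity argument whose structure may couple $K$ and $\rho$ in non-obvious ways; the upper bound is more elastic, since any admissible trial state gives a valid estimate and the one from Theorem \ref{main_theorem} needs only to have its $K$-dependence tracked. Once both bounds have been written as explicit functions of $K$, parts (i) and (ii) reduce to elementary calculus.
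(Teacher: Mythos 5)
Your proposal follows essentially the same route as the paper: keep the trial-state construction from the proof of Theorem \ref{main_theorem}, upgrade the qualitative tail estimate $\rho\cdot o(1)$ on $\int_{\text{tail}}\widehat V$ to the quantitative bounds $\rho\,O(K^{3-k})$ resp.\ $\rho\,O(K^2e^{-cK})$, and re-optimize the parameters; your optimal cutoff $\rho^{1/(3k-3)}$, the exponent $\theta=2k/(3k-3)$, and the $\rho^{2/3}(\ln\rho)^6$ rate in the exponential case all coincide with the paper's choices.

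Two points of bookkeeping differ from what actually happens, and the second matters for how the plan is written. First, the lower bound requires no revisiting: the paper's bound \eqref{lower_bound} is exact, $\mathcal F\ge \frac12\widehat V(0)\rho^2-\frac12 V(0)\rho$, with no cutoff and no error term, so the entire error in \eqref{asymptotics_regular} and \eqref{asymptotics_regular_exp} comes from the upper bound; your concern about an Onsager-type positivity argument coupling the cutoff to $\rho$ does not arise. Second, the trial state carries \emph{two} parameters, an amplitude $\lambda$ and a cutoff $L$, and the error consists of $C_1\lambda L^5+C_2\lambda^2L^6$ together with $\rho\,O(1/\lambda)$ and the tail $\rho\int_{\R^3\setminus P_L}\widehat V$. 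Your splitting $R=A_\pm(K)+B_\pm(K,\rho)$ with $A_\pm$ a fixed polynomial in the cutoff alone is therefore too coarse: as literally stated it cannot produce part (ii), since with $K\sim c^{-1}\ln\rho$ a polynomial in $K$ is only $O((\ln\rho)^6)$, whereas the $\rho^{2/3}$ prefactor comes from the amplitude ($\lambda=\rho^{1/3}$ entering $\lambda^2L^6$, balanced against the $\rho/\lambda$ term). Your closing heuristic via the $k\to\infty$ limit of part (i) shows you have the correct balance in mind, but the optimization must be phrased over both $\lambda$ and $L$ (the paper takes $\lambda=\rho^{1-\theta}$, $L=\rho^{(3\theta-2)/6}$ in part (i) and $\lambda=\rho^{1/3}$, $L=\frac{1}{3c}\ln\rho$ in part (ii)).
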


As we will see in the following sections, the proof of the Theorem \ref{extra_regularity} will follow by obtaining a stronger estimation on one of the terms in the functional. Let us proceed to the proof of the first theorem.

\section{Proof of Theorem \ref{main_theorem}}

As mentioned before, the proof is based on derivation of proper lower and upper bound of $E(\rho)$.

\subsection{The lower bound}

We will start with the derivation of the lower bound, as it is rather straightforward. Note that the only term possibly yielding negative values in the functional \eqref{functional} is:
\begin{equation}
\label{negative}
\rho_0 (2\pi)^{-3}\int_{\R^3} [\alpha(p) + \gamma(p)] \dd p.
\end{equation}
Indeed, every term aside the quadratic one in $\alpha$ is positive due to the assumptions $\hat V \ge 0$ and $\gamma \ge 0$. To prove positivity of this particular quadratic term we will show that
\begin{equation}
\label{alpha_conv_positive}
(2\pi)^{-6}\iint_{\mathbb{R}^3\times \mathbb{R}^3}\widehat{V}(p-q)\alpha(p)\alpha(q)\dd p\dd q = \int_{\R^3} V(x) |\check \alpha (x)|^2 \dd x,   
\end{equation}
which is positive as $V \ge 0$. As $\alpha$ is not necessarily a $L^2$ function, we cannot use Plancherel Theorem directly. We need to proceed more carefully. First note that $\alpha \in L^1(\R^3) + L^2(\R^3)$ due to the decomposition $\alpha = \alpha_1 + \alpha_2$ with $\alpha_1 = \alpha \chi_{\gamma > 1} \in L^1$ and $\alpha_2 = \alpha \chi_{\gamma \le 1} \in L^2$. It follows that the inverse Fourier transform $\check \alpha$ is well defined. Now we can rewrite $\alpha(p)\alpha(q)$ appearing inside the integral on the left hand side of \eqref{alpha_conv_positive} as
$$\alpha(p)\alpha(q)= \alpha_1(p)\alpha_1(q) + \alpha_1(p)\alpha_2(q) + \alpha_2(p)\alpha_1(q) + \alpha_2(p)\alpha_2(q).$$
With this expansion we can express the last three terms as $L^2$ inner products of $\overline{\alpha}_i$ and the convolutions $\widehat V * \alpha_j$ ($i,j=1,2$). This is possible as the convolutions $\widehat V * \alpha_2$ and $\widehat V * \alpha_1$ are $L^2$ functions due to the facts that $\widehat V \in L^1$, $\widehat V \in L^2$ (by interpolation with the $L^\infty$ norm) and Young inequality. Now, for those terms, Plancherel Theorem is applicable. As for the first term, we can express it by integral of $V$ and inverse transform of $\alpha$ by using the definition of Fourier transform for $V$ and changing the order of integration by Fubini theorem (we use the assumptions on $V$). In the end we obtain
\begin{align*}
&(2\pi)^{-6} \iint_{\mathbb{R}^3\times \mathbb{R}^3}\widehat{V}(p-q)\alpha(p)\alpha(q)\dd p\dd q 
\\& = \int_{\R^3} V(x) |\check \alpha_1(x)|^2 \dd x + \int_{\R^3} V(x) \check \alpha_1(x) \overline{\check \alpha_2(x)} \dd x
\\&+ \int_{\R^3} V(x) \check \alpha_2(x) \overline{\check \alpha_1(x)} \dd x + \int_{\R^3} V(x) |\check \alpha_2(x)|^2 \dd x
\\&= \int_{\R^3} V(x) |\check \alpha (x)|^2 \dd x,
\end{align*}
which is equality \eqref{alpha_conv_positive}.

Neglecting all terms aside \eqref{negative} and the one concerning the total density squared leads to the estimate
$$\mathcal F(\gamma, \alpha, \rho_0) \ge \frac12 \widehat V(0) \rho^2 + \rho_0 (2\pi)^{-3}\int_{\R^3} \widehat V(p) [\alpha(p) + \gamma(p)] \dd p.$$
By the inequality $\alpha^2\leq\gamma^2+\gamma$ we can deduce that
\begin{equation}
    \label{minus1/2}
    \alpha(p) + \gamma(p) \ge -\frac12.
\end{equation}
Using this fact we can easily find lower bound for the integral $\int \widehat{V}(\gamma+\alpha)$:
\begin{align*}
    \rho_0 \left(2\pi\right)^{-3}\int_{\mathbb{R}^3}\widehat{V}(p)\left[\gamma(p)+\alpha(p)\right]\dd p
    & \ge - \frac12\rho_0\left(2\pi\right)^{-3}\int_{\mathbb{R}^3}\widehat{V}(p)\dd p
     \\& = -\frac12V(0)\rho_0
     \\& \ge -\frac12V(0)\rho,
\end{align*}
where the last inequality follows from the assumption $V \ge 0$ (actually just the fact $V(0) \ge 0$, see Remark \ref{V_not_rep} at the end of this section). We deduce that for any state $(\gamma, \alpha, \rho_0) \in \mathcal D_{\rho}$ we have
\begin{equation}
\label{lower_bound}
\mathcal F(\gamma, \alpha, \rho_0) \ge \frac12 \widehat V(0) \rho^2 - \frac12 V(0) \rho.   
\end{equation}
Taking infimum over the domain $\mathcal{D_\rho}$ leads to obtaining the lower bound required for proving \eqref{asymptotics}.
\begin{remark}
Note that obtained estimation does not include any error term. This in particular means that the correction term in \eqref{asymptotics} is positive.
\end{remark}

\subsection{The upper bound}

We proceed to the derivation of the upper bound. Note that obtaining just the leading $\rho^2$ term is easy, it suffices to consider the trial state $(0, 0, \rho)$. The functional evaluated at it is equal to:
\begin{equation*}
    \mathcal{F}(0, 0, \rho_0 = \rho) = \frac12 \widehat{V}(0)\rho^2.
\end{equation*}
To obtain the second term $-\frac12 V(0)\rho$, we will construct a family of trial states that yields this result with an error of order $o(\rho)$.

\pagebreak

Before doing that, we shall give some intuition behind the construction. The main idea is to asympotically saturate the inequality \eqref{minus1/2} for $p$'s on sufficiently large set. As proven in \cite[Corollary 2.1]  {NapReuSol1} and \cite[Corollary 2]{Old}, the minimizing triple $(\alpha^{\min},\gamma^{\min},\rho^{\min}_{0})$ of the functional $\mathcal{F}$ corresponds to the quasi-free pure state, which means that it satisfies $\alpha^2(p) = \gamma(p)\left(\gamma(p)+1\right)$. This suggests considering the states $\gamma$ and $\alpha$ (up to a multiplicative constant) of the form:
\begin{equation*}
    \gamma = \lambda \chi_{A}, \quad \alpha = -\sqrt{\lambda^2 + \lambda}\chi_{A} ,
\end{equation*}
where $\chi_{A}$ is a characteristic function of a suitably chosen set $A$ of finite measure, in order to preserve the integrability of $\gamma$. Asymptotically for $\lambda \to \infty$ the expression $\gamma + \alpha$ behaves as:
\begin{equation}
\label{chosen_asymptotics}
\gamma(p) + \alpha(p) = \left(\lambda - \sqrt{\lambda^2 + \lambda}\right)\chi_{A}(p) = -\frac{1}{1 + \sqrt{1 + \frac1{\lambda}}}\chi_{A}(p) = \left(-\frac12 + O\left(\frac1{\lambda}\right)\right)\chi_{A}(p).
\end{equation}
In order to successfully use the above bound, the set $A$ needs to cover sufficiently large space. In this case the integral of $\widehat V$ on the complement of this set can be treated as a sufficiently small error term. Simultaneously, we need to remember about remaining terms of the functional. Their impact should be negligible with respect to total density $\rho$. 
To summarize, the requirements for the trial states $\gamma$ and  $\alpha$ are as follows:
\begin{itemize}
    \item both functions may have arbitrarily large value on their support,
    \item supports of those functions can cover arbitrarily large space,
    \item certain integrals in \eqref{functional} have to be small with respect to total density $\rho$.
\end{itemize}

Let us proceed to the construction. For $L > 0$ define set $P_L \subset \R^3$ as
$$P_L = \left[-\frac{L}{2},\frac{L}{2}\right]^3$$
and let $\chi_{P_L}$ be its characteristic function. Let $\gamma$ and  $\alpha$ be defined accordingly to the introduction of this section, that is as:
\begin{equation*}
    \gamma(p) = \left(2\pi\right)^{3}\lambda \chi_{P_L}(p), \quad \alpha(p) = -(2\pi)^3 \sqrt{\lambda^2+\lambda}\chi_{P_L}(p).
\end{equation*}
The relation between parameters $\lambda$, $L$ and total density $\rho$ is to be specified later, for this moment we will only assume that $\rho$ is sufficiently large so that $\rho_\gamma \le \rho$. As the density is fixed, we also have $\rho_0 = \rho - \rho_\gamma$.

Now, we will estimate each term in energy functional \eqref{functional} evaluated at the above trial state. First we will consider the kinetic term:
\begin{equation}
    \label{kinetic_est}
    \begin{split}
        \left(2\pi\right)^{-3}\int_{\mathbb{R}^3}p^2\gamma(p)\dd p = \lambda \int_{P_L}p^2\dd p = \lambda \int_{-L/2}^{L/2}\dd p_x\int_{-L/2}^{L/2}\dd p_y\int_{-L/2}^{L/2}\dd p_z\left(p_x^2+p_y^2+p_z^2\right)=\\
        = 3\lambda \int_{-L/2}^{L/2}\dd p_x\int_{-L/2}^{L/2}\dd p_y\int_{-L/2}^{L/2}p_z^2\dd p_z = 3\lambda L^2\left.\left(\frac{p_z^3}{3}\right)\right|_{-L/2}^{L/2} = \frac14 \lambda L^5.
    \end{split}
\end{equation}
Next, we will focus on the term $\rho_0 (2\pi)^{-3}\int\widehat{V}(\gamma+\alpha)$. Using \eqref{chosen_asymptotics} we obtain:
\begin{equation}
\label{decomposition}
\begin{split}
&\rho_{0}\left(2\pi\right)^{-3}\int_{\mathbb{R}^3}\widehat{V}(p)\left(\gamma_n(p)+\alpha_n(p)\right)\dd p = (\rho - \rho_\gamma) \int_{P_L}\widehat V(p) \left(-\frac12 + O(1/\lambda)\right) \dd p
\\& = - \frac12 \rho \int_{P_L} \widehat V(p) \dd p + \frac12 \rho_\gamma \int_{P_L} \widehat V(p) \dd p + (\rho - \rho_\gamma) O(1/\lambda) \int_{P_L} \widehat V(p) \dd p.
\end{split}
\end{equation}
The first term is the only negative one, we can decompose it further:
$$- \frac12 \rho \int_{P_L} \widehat V(p) \dd p = - \frac12 \rho \left(\int_{\R^3} \widehat V(p) \dd p - \int_{\R^3 \setminus P_L} \widehat V(p) \dd p\right) = - \frac12 \rho V(0) +  \frac12 \rho \int_{\R^3 \setminus P_L} \widehat V(p) \dd p.$$
The first expression in the above expansion is the term that we need in order to prove \eqref{asymptotics}. We will now show that the second term in the above expansion and the rest of terms in \eqref{decomposition} are negligible up to correct order. We have:
\begin{equation}
\label{est1}
\frac12 \rho \int_{\R^3 \setminus P_L} \widehat V(p) \dd p = \rho \cdot o(1)_{L \to \infty},
\end{equation}
where we used the facts $\widehat V \in L^1(\R^3)$, $\bigcup_L P_L = \R^3$ and Lebesgue Dominated Convergence Theorem. For the remaining terms in \eqref{decomposition} we get:
\begin{equation}
\label{est2}
\frac12 \rho_\gamma \int_{P_L} \widehat V(p) \dd p \le \frac12 \rho_\gamma \int_{\R^3} \widehat V(p) \dd p = \frac12 V(0) \rho_\gamma
\end{equation}
and
\begin{equation}
\label{est3}
(\rho - \rho_\gamma) O(1/\lambda) \int_{P_L} \widehat V(p) \dd p \le \rho \cdot O(1/\lambda)_{\lambda \to \infty} \cdot \int_{\R^3} \widehat V(p) \dd p = \rho \cdot O(1/\lambda)_{\lambda \to \infty}.
\end{equation}
Using \eqref{est1}, \eqref{est2} and \eqref{est3} to estimate \eqref{decomposition} we get:
\begin{equation}
\label{4_term_est}
\rho_{0}\left(2\pi\right)^{-3}\int_{\mathbb{R}^3}\widehat{V}(p)\left(\gamma_n(p)+\alpha_n(p)\right)\dd p \le -\frac12 V(0) \rho + \rho \cdot o(1)_{L \to \infty} + \frac12 V(0) \rho_\gamma + \rho \cdot O(1/\lambda)_{\lambda \to \infty}.
\end{equation}

It remains to bound the convolution term of $\mathcal{F}$. As we consider high density limit, we can assume $\lambda > 1$ and therefore $\gamma \ge 1$ on its support $P_L$. On this set we obtain pointwise estimate:
\begin{equation*}
    |\alpha| \le \sqrt{\gamma(\gamma + 1)} = \gamma \sqrt{1 + 1/\gamma} \le \sqrt2 \gamma.
\end{equation*}
For $p$'s outside $P_L$ the above inequality is also satisfied in a trivial way. Now we can make the estimation:
\begin{equation}
\label{convo_est}
        \begin{gathered}
            \frac12\left(2\pi\right)^{-6}\iint_{\mathbb{R}^3\times\mathbb{R}^3}\widehat{V}(p-q)\left[\gamma(p)\gamma(q)+\alpha(p)\alpha(q)\right]\dd p\dd q \leq \\
            \leq \frac32\left(2\pi\right)^{-6}\iint_{\mathbb{R}^3\times\mathbb{R}^3}\widehat{V}(p-q)\gamma(p)\gamma(q)\dd p\dd q  \\
            = \frac32\left(2\pi\right)^{-6}\int_{\mathbb{R}^3}\left(\widehat{V}\ast\gamma\right)(p)\gamma(p)\dd p\leq \frac32\left(2\pi\right)^{-6}\|\widehat{V}\ast\gamma\|_\infty\int_{\mathbb{R}^3}\gamma(p)\dd p \\
            = \frac32\left(2\pi\right)^{-3}\|\widehat{V}\ast\gamma\|_\infty\rho_{\gamma} \leq \frac32\left(2\pi\right)^{-3}\|\widehat{V}\|_\infty \|\gamma\|_1 \rho_{\gamma} = \frac32\|\widehat{V}\|_\infty\rho_{\gamma}^2,
        \end{gathered}
\end{equation}
where in the last line we used Young inequality for convolution.

We now combine all of the established estimates \eqref{kinetic_est}, \eqref{4_term_est} and \eqref{convo_est}. Estimating $\rho_\gamma \le \rho^2_\gamma = \lambda^2 L^6$ and denoting constants independent of the parameters in the error term by $C_1$ and $C_2$ we obtain a bound:
\begin{equation}
\label{upper_bound_parameters}
\begin{split}
\mathcal{F}(\gamma, \alpha, \rho_0) &\le \frac12 \widehat V(0) \rho^2 - \frac12 V(0) \rho 
\\&+ C_1 \lambda L^5 + \rho \cdot \big(o(1)_{L \to \infty} + O(1/\lambda)_{\lambda \to \infty}\big) +  C_2 \lambda^2 L^6.
\end{split}
\end{equation}
We will specify $\lambda$ and $L$ to get desired result. We will look for relation of the form:
$$\lambda = \rho^{r}, \; L = \rho^s,$$
for appropriately chosen powers $r > 0$ and $s > 0$. With this selection, the error term (as $\rho \to \infty$) is of order: 
\begin{equation}
\label{rs_assuptions}
C_1 \rho^{r + 5s} + \rho \big(o(1) + O(\rho^{-r})\big) + C_2 \rho^{2r + 6s} = o(\rho) + O(\rho^{2r + 6s}).
\end{equation}
Now, choose any parameters $r$ and $s$ such that $2r + 6s < 1$. Then the error term is of order $o(\rho)$ and therefore:
$$\mathcal{F}(\gamma, \alpha, \rho_0 = \rho - \rho_\gamma) \le \frac12 \widehat V(0) \rho^2 - \frac12 V(0) \rho + o(\rho)_{\rho \to \infty}.$$
Combining it with lower bound \eqref{lower_bound} we have finished the proof of Theorem \ref{main_theorem}.

\pagebreak

We will end this section with some discussion concerning the obtained result.

\begin{remark}
We shall briefly discuss the structure of the trial states realizing the optimal upper bound of the ground state energy. First note that the kinetic energy does not explicitly contribute to the energy expansion \eqref{asymptotics} as both terms are related to the interaction potential $V$. This observation is with agreement with trial states used in the proof, where the kinetic energy contribution was a part of the error term and it becomes negligible in the high density limit.

Another observation is that for the trial states the ratio $\rho_0/\rho$ of the condensate density and total density of the system tends to one in this limit. It follows that, in this limit, it is energetically favorable for particles to almost fully occupy the zero momentum state.
\end{remark}

\begin{remark}
\label{V_not_rep}
The essential estimates in the above proof were obtained using only the assumption $\widehat V \ge 0$ (and its consequence $V(0) \ge 0$) without considering the sign of $V$. The assumption $V \ge 0$ was the technical one, necessary in \eqref{alpha_conv_positive} to bound the quadratic term involving $\alpha$. As the quadratic term involving both $\gamma$ and $\alpha$ in the construction of the functional arises from evaluating the expectation value of $\widehat V$ in a quasi-free state, it is expected that only the assumption $\widehat V \ge 0$ should be sufficient. Similar observation on the potential was also made in \cite{Lieb}. This suggest that the estimates concerning the lower bound of $\mathcal{F}$ (in particular \eqref{alpha_conv_positive}) might be refined by obtaining an estimate of the entire convolution term of the functional.
\end{remark}

\begin{remark}
\label{other_dim}
Presented method is also applicable for dimensions other than $d=3$. Repeating the argument for any dimension $d$, the analogue of \eqref{rs_assuptions} is an error of order:
$$o(\rho) + O\left(\rho^{m(r,s)}\right),$$
where
$$m(r,s) = \max \{r + (d+2)s, 2r + 2ds\}.$$
Choosing $r,s$ such that $m(r,s)<1$ yields the result. Note that the maximum of the above two expressions in nontrivial only if $d = 1.$
\end{remark}

\section{Proof of Theorem \ref{extra_regularity} and further discussion}

In this section we shall prove that adding some extra assumptions on the potential $V$ (or rather its Fourier transform $\widehat V$) leads to obtaining lower order error term in \eqref{asymptotics}. Note that the worst error estimate in the proof of Theorem \ref{main_theorem} comes from \eqref{est1} when estimating the integral of $\widehat V$ on the complement of the set $P_L$. This justifies the need of assumption on the rate of decay of $\widehat V$ at infinity.

\begin{proof}[Proof of Theorem \ref{extra_regularity}]
The proof is divided into two steps, each one concerning particular assumption on $\widehat V$.

\textbf{Step 1}
\\*We repeat every step in the proof of the previous theorem. With the extra assumption on $\widehat V$ we can make further estimation in \eqref{est1} and write:
\begin{equation}
\label{decay}
\frac12 \rho \int_{\R^3 \setminus P_L} \widehat V(p) \dd p \le \frac12 \rho \int_{\R^3 \setminus P_L} \frac{C}{|p|^k} \dd p \le  \tilde C \rho \int_{L/2}^\infty \frac{r^2}{r^k} \dd r = \rho \cdot O\left(\frac{1}{L^{k-3}}\right)_{L \to \infty}.
\end{equation}
The analogue of the upper bound \eqref{upper_bound_parameters} becomes as follows:
\begin{equation*}
\begin{split}
\mathcal{F}(\gamma, \alpha, \rho_0) &\le \frac12 \widehat V(0) \rho^2 - \frac12 V(0) \rho 
\\&+ C_1 \lambda L^5 + \rho \cdot \left(O\left(1/L^{k-3}\right)_{L \to \infty} + O(1/\lambda)_{\lambda \to \infty}\right) +  C_2 \lambda^2 L^6.  
\end{split}
\end{equation*}
Once again we will look for relation of the parameters of the form:
$$\lambda = \rho^{r}, \; L = \rho^s$$
for appropriately chosen powers $r > 0$ and $s > 0$. The error term is then of order:
$$C_1 \rho^{r + 5s} + \rho \Big(O\left(\rho^{-(k-3)s}\right) + O(\rho^{-r})\Big) + C_2\rho^{2r + 6s} = O(\rho^{m(r,s)}),$$
where
$$m(r,s) = \max \{2r + 6s,  1 - (k-3)s, 1 - r\}.$$
To get the error of desired order, parameters $r$ and $s$ need to be chosen in a way that $m(r,s) \le \theta$.

First, we will derive the necessary conditions for the existence of such $\theta$. Assume there exist $r > 0$ and $s > 0$ satisfying $m(r,s) \le \theta$. Then, combining inequalities $1-r \le \theta$ and $2r + 6s \le \theta$ results in
\begin{equation}
\label{theta_1st_est}
\theta > \frac23,    
\end{equation}
so this is the lowest order of error possible to obtain by proposed method (note that this inequality is strict). Next, as $r \ge 1 - \theta$, we can write:
$$r = 1 - \theta + \delta,$$
for some $\delta \ge 0$ and plug it to the first inequality, that is:
$$2r + 6s = 2 - 2\theta + 2 \delta + 6s \le \theta \Longleftrightarrow s\le \frac{3\theta - 2 - 2\delta}{6}.$$
Using the fact that $(k-3)s \ge 1 -\theta$, the following inequality must be fulfilled:
$$\frac{1 - \theta}{k-3} \le \frac{3\theta - 2 - 2 \delta}{6} \Longleftrightarrow \delta \le \frac{3\theta (k - 1) - 2k}{2(k-3)}.$$
As $\delta \ge 0$ we need to have:
$$3\theta (k - 1) - 2k \ge 0 \Longleftrightarrow \theta \ge \frac{2k}{3k - 3}.$$
Note that the first obtained estimate \eqref{theta_1st_est} is the limiting case (for $k \to \infty$) of the above inequality.

Reversing the order of this reasoning we observe that derived condition for $\theta$ is also the sufficient one for the existence of $r$ and $s$ yielding $m(r,s) \le \theta$. One of possible choices of such $r$ and $s$ is
\begin{equation*}
r = 1 - \theta, \; \; s = \frac{3\theta - 2}{6}.   
\end{equation*}
Thus the expansion \eqref{asymptotics_regular} stated in the Theorem \ref{extra_regularity} is valid for $\theta = \frac{2k}{3k - 3}$.

\textbf{Step 2}
\\*When $\widehat V$ decays exponentially, we can do analogous computation as in \eqref{decay} and obtain
$$\frac12 \rho \int_{\R^3 \setminus P_L} \widehat V(p) \dd p \le \frac12 \rho \int_{\R^3 \setminus P_L} Ce^{-c|p|} \dd p \le  C' \rho \int_{L/2}^\infty r^2 e^{-cr} \dd r \le C'' \rho L^{2}e^{-cL} = \rho \cdot O\left(L^2e^{-cL}\right)_{L \to \infty}$$
and therefore
\begin{equation*}
\begin{split}
\mathcal{F}(\gamma, \alpha, \rho_0) &\le \frac12 \widehat V(0) \rho^2 - \frac12 V(0) \rho 
\\&+ C_1 \lambda L^5 + \rho \cdot \bigg(O\left(L^2e^{-cL}\right)_{L \to \infty} + O(1/\lambda)_{\lambda \to \infty}\bigg) +  C_2 \lambda^2 L^6.  
\end{split}
\end{equation*}
This time we will choose parameters $\lambda$ and $L$ as
$$\lambda = \rho^{1/3}, \;  L = \frac{1}{3c}\ln \rho,$$
where $c$ is the same constant as in the exponent in asymptotic term. With this selection we can express the error term as:
$$\tilde C_1\rho^{1/3}(\ln \rho)^5 + \rho \left(O\left(\frac{(\ln \rho)^2}{\rho^{1/3}}\right) + O\left(\frac1{\rho^{1/3}}\right)\right) + \tilde C_2\rho^{2/3}(\ln \rho)^6 = O\left(\rho^{2/3}(\ln \rho)^6\right).$$
\end{proof}

We shall end with some general remarks.

\begin{remark}
Similarly like in the Remark \ref{other_dim} it is possible to generalize this result to dimensions other than $d=3$. In the polynomial decay case the assumption  \eqref{rate_of_conv} needs to hold for $k > d$. Following the same reasoning as above, we deduce that $\theta$, for which the result is valid, can be chosen as:
$$\theta = \frac{2k}{3k - d},$$
which is obtained for parameters $r$ and $s$ chosen as, for example,
$$r = 1 - \theta, \; \; s = \frac{3\theta - 2}{2d}.$$
The limiting value ($k \to \infty$) is the same as before, equal to $\frac23$.

In the case of exponential decay, the asymptotic error will be of order $O\left(\rho^{2/3}(\ln \rho)^{2d}\right)$, as the logarithm term is related to the squared volume of the set $P_L$, which is equal to $L^{2d}$.
\end{remark}

\begin{remark}
We would like to note that considered assumptions on the polynomial or exponential rate of decay are very general, as they hold for many reasonable potentials (e.g. for $V$ being a Schwartz function). Presented method, however, can be easily generalized to rates of decay of $\widehat V$ of other form. 

Also note that, by using similar argument preceding inequality \eqref{theta_1st_est}, regardless of the rate of decay it is not possible by this method to obtain error of order lower than $\rho^{2/3}$. As the considered trial state appears to be well suited for this problem, this suggest that the explicit form of the third leading order term in the expansion of the ground state energy might be much more dependent on the properties of the interaction potential $V$.
\end{remark}

\noindent {\bf Acknowledgments.} We want to thank Prof. Marcin Napi\'orkowski for many helpful discussions and support. The work of NM was supported by the Polish-German NCN-DFG grant Beethoven Classic 3 (project no. 2018/31/G/ST1/01166).

\section*{Data availability}
Data sharing is not applicable to this article as no new data were created or
analyzed in this study.

\end{document}